\newtheorem{theorem}{Theorem}
\newtheorem{definition}{Definition}
\newtheorem{corollary}[theorem]{Corollary}    
\newtheorem{lemma}[theorem]{Lemma}    
\renewenvironment{proof}{\noindent{\bf Proof:}}{\qed\\}
\newenvironment{proofof}[1]{\noindent{\bf Proof of #1:}}{\qed\\}
\newcommand{\complex}{{\mathbb C}}
\newcommand{\reals}{{\mathbb R}}
\newcommand{\ket}[1]{| #1 \rangle}
\newcommand{\bra}[1]{ \langle #1 |}
\newcommand{\ketbra}[2]{| #1 \rangle\!\langle #2 |}
\newcommand{\braket}[2]{\langle #1 | #2 \rangle }
\newcommand{\norm}[1]{\left\| #1 \right\|}
\newcommand{\trnorm}[1]{\left\| #1 \right\|_{\mathrm{tr}}}
\newcommand{\size}[1]{\left| #1 \right|}
\newcommand{\set}[1]{\left\{ #1 \right\}}
\newcommand{\trace}{{\mathrm{Tr}}}
\newcommand{\support}{{\mathrm{supp}}}
\newcommand{\density}[1]{\ketbra{#1}{#1}}
\newcommand{\transpose}{{\mathrm T}}
\newcommand{\complexi}{{\mathrm{i}}}
\newcommand{\tr}{\mathsf{\trace}}
\newcommand{\defeq}{\stackrel{\scriptsize{\mathsf{def}}}{=}}
\newcommand{\eqdef}{\stackrel{\scriptsize{\mathsf{def}}}{=}}
\newcommand{\adjoint}{\dagger}
\newcommand{\suppress}[1]{}
\newcommand{\comment}[1]{}
\newcommand{\etal}{\emph{et al.\/}}
\newcommand{\cH}{{{\mathcal H}}}
\newcommand{\cK}{{{\mathcal K}}}
\newcommand{\sD}{{{\mathsf D}}}
\newcommand{\sS}{{{\mathsf S}}}
\newcommand{\sF}{{{\mathsf F}}}
\newcommand{\rI}{{{\mathrm I}}}
\newcommand{\rL}{{{\mathrm L}}}
\newcommand{\eps}{\varepsilon}
\newcommand{\rD}{{{\mathrm D}}}
\newcommand{\rP}{{{\mathrm P}}}
\title{ {\bf Short proofs of the Quantum Substate Theorem}
}
\author{
Rahul Jain\thanks{
Centre for Quantum Technologies and Department of Computer Science, 
National University of Singapore,  Block S15, 3 Science Drive~2,
Singapore 11754.
Email: \texttt{rahul@comp.nus.edu.sg}.
Work done in part while visiting the Institute for Quantum Computing,
University of Waterloo.
} \\
National U.\ Singapore
\and
Ashwin Nayak\thanks{
Department of Combinatorics and Optimization,
and Institute for Quantum Computing, University of Waterloo,
200 University Ave.\ W.,
Waterloo, ON, N2L 3G1, Canada.
Email: {\tt ashwin.nayak@uwaterloo.ca}.
Work done in part at Center for Quantum Technologies,
National University of Singapore, and at Perimeter Institute for
Theoretical Physics.
Research supported in part by NSERC Canada, CIFAR, an ERA (Ontario),
QuantumWorks, MITACS, and ARO (USA). Research at
Perimeter Institute is supported in part by the Government of Canada
through Industry Canada and by the Province of Ontario through MRI.
} \\
U.\ Waterloo
}
\date{December 30, 2011}
\begin{document}



\maketitle

\begin{abstract}
  The Quantum Substate Theorem due to Jain, Radhakrishnan, and
  Sen~(2002) gives us a powerful operational interpretation
  of relative entropy, in fact, of the observational divergence of two
  quantum states, a quantity that is related to their relative
  entropy. Informally, the theorem states that if the observational
  divergence between two quantum states~$\rho, \sigma$ is small, then
  there is a quantum state~$\rho'$ close to $\rho$ in trace distance,
  such that $\rho'$ when scaled down by a small factor becomes a
  substate of $\sigma$. We present new proofs of this theorem. The
  resulting statement is optimal up to a constant factor in its
  dependence on observational divergence.  In addition, the proofs are
  both conceptually simpler and significantly shorter than the earlier
  proof.
\end{abstract}

\comment{
IEEE IT keyword: quantum computing

Keywords: quantum information theory, observational divergence, 
relative entropy, substate theorem, smooth relative min-entropy

ACM classification:
F.0 (Theory of Computation, General), E.4 (Data, Coding and information theory)

MSC classification: 
Information and communication, circuits
94A17   	Measures of information, entropy
94A15   	Information theory, general
Quantum Theory
81P45   	Quantum information, communication, networks
81P94		Quantum cryptography
Computer Science
68P30   	Coding and information theory
68Q12   	Quantum algorithms and complexity
Operator theory
47B65   	Positive operators and order-bounded operators
}


\section{The Quantum Substate Theorem}
\label{sec-substate}

Consider quantum states~$\rho,\sigma \in \rD(\cH)$, where~$\cH$ is a
finite dimensional Hilbert space~$\cH$, and~$\rD(\cH)$ denotes the set
of all quantum states with support in~$\cH$, i.e., the set of unit
trace positive semi-definite operators on~$\cH$. We say that~$\rho$ is
a~$c$-substate of~$\sigma$ if~$\rho \preceq 2^c \sigma$,
where~$\preceq$ represents the L\"owner partial order on operators
on~$\cH$. We may equivalently express this condition in terms of
measurement outcomes (``POVM elements'') as follows. Let
\[
\rP(\cH) \quad \defeq \quad
    \set{ M \in \rL(\cH) \;:\; O \preceq M \preceq \rI } \enspace,
\]
denote the set of POVM elements on~$\cH$, where~$\rL(\cH)$ is the
space of linear operators and~$\rI$ is the identity operator on~$\cH$.
The state~$\rho$ is a~$c$-substate of~$\sigma$ iff for every
measurement outcome~$M \in \rP(\cH)$, the probability~$\tr(M\sigma)$
of observing~$M$ when~$\sigma$ is measured according to the
POVM~$\set{M, \rI- M}$ is at least~$\tr(M\rho)/2^c$, a~$1/2^{c}$
fraction of the probability of observing~$M$ when~$\rho$ is
measured. Morally, the state~$\sigma$ may be decomposed as~$\sigma =
\alpha \rho + (1-\alpha) \tilde{\sigma}$, for some~$\tilde{\sigma}
\in \rD(\cH)$,
with~$\alpha \geq 1/2^c$. This in turn may be used to construct the
state~$\rho$ from~$\sigma$ through quantum analogues of rejection
sampling. For example, we may apply the quantum measurement specified by
the Kraus operators~$\set{ \sqrt{\alpha}\, \rho^{1/2} \sigma^{-1/2},
\sqrt{1-\alpha} \, \tilde{\sigma}^{1/2} \sigma^{-1/2} }$, or go through a
purification of~$\sigma$~\cite{JainRS02,JainRS09}.

Given arbitrary quantum states~$\rho, \sigma \in \rD(\cH)$ we are
interested in how well~$\sigma$ masquerades as~$\rho$ in the above
sense. In other words, we are interested in the least~$c$ such
that~$\rho$ is a~$c$-substate of~$\sigma$. We call this quantity the
\emph{relative min-entropy\/}~$\sS_\infty(\rho \| \sigma)$ of the two
states. A generalization of this notion to bipartite states has been
studied by Renner~\cite[Chapter~3]{Renner05}, and the notion itself
has been studied by Datta~\cite{Datta09} as ``max-relative entropy''.
For typical applications, such as privacy trade-offs in communication
protocols~\cite{JainRS02,JainRS09}, it suffices to construct an
approximation~$\rho'$ to~$\rho$, with respect to a metric on quantum
states. This leads us to the notion of the \emph{smooth relative
  min-entropy\/}~$\sS^\eps_\infty(\rho \| \sigma)$ of the two states,
a quantity implicitly studied by Jain, Radhakrishnan, and
Sen~\cite{JainRS02,JainRS09} and later explicitly by
Renner~\cite[Chapter~3]{Renner05} and Datta~\cite{Datta09}.  The
metric initially used for the smoothness parameter~$\eps$ was the
trace distance. The fidelity of quantum states gives us a more natural
metric in typical applications, and we adopt this measure of closeness
in the article.

Let $\eps \in (0,1)$ and~$\rho, \sigma \in \rD(\cH)$ be such that
$\support(\rho) \subseteq \support(\sigma)$.  We may express
the~$\eps$-smooth relative min-entropy~$\sS^\eps_\infty(\rho \|
\sigma)$ as the base-$2$ logarithm of the value of the following
optimization problem with variables~$\rho' \in \rD(\cH)$ 
and~$\kappa \in \reals$:
\begin{align*}
\text{minimize:} & \quad \kappa \\
   \text{subject to:} & \\
\rho' \quad & \preceq \quad \kappa\, \sigma \\
\tr \, \rho' \quad & = \quad 1  & \textrm{(P1)} \\
\sF(\rho,\rho') \quad & \geq \quad 1 - \eps  \\
\rho' \in \rL(\cH), & \quad \rho' \succeq 0 \\
\kappa \in \reals, & \quad \kappa \geq 0 
\end{align*}
Here~$\sF(\rho', \rho) \eqdef \trnorm{ \sqrt{\rho'} \sqrt{\rho} }^2$,
denotes the fidelity between the two quantum states, and~$\trnorm{M}
\eqdef \tr \sqrt{M^\adjoint M}$ denotes the trace norm of the linear
operator~$M \in \rL(\cH)$.  The existence of a pair~$\rho',\kappa$
that are feasible for the problem (P1) means that there is a quantum
state~$\rho'$ with fidelity~$\sF(\rho', \rho) \geq 1- \eps$ that is
also a~$(\log_2 \kappa)$-substate of~$\sigma$. The substate constraint
implies that~$\kappa \geq 1$.

The program (P1) is feasible, as~$\rho' \eqdef \rho$ and~$\kappa \eqdef
1/\lambda$, where~$\lambda$ is the smallest non-zero eigenvalue
of~$\sigma$, satisfy all the constraints. Therefore we may restrict
the optimization to~$\kappa \in [0,1/\lambda]$ and the compact set of
quantum states with fidelity at least~$1-\eps$
with~$\rho$. The~$\eps$-smooth relative min-entropy between the two
states is thus always achieved.

If~$\rho$ is a~$c$-substate of~$\sigma$, i.e., their relative
min-entropy is at most~$c$, then their relative entropy $\sS(\rho \|
\sigma) \defeq \tr\, \rho (\log_2 \rho - \log_2 \sigma)$ is also at
most~$c$. Jain \etal~\cite{JainRS02,JainRS09} gave a weak converse to
this relation via the Quantum Substate Theorem, which gives a bound on
the~$\eps$-smooth relative min-entropy in terms of the more familiar
notion of relative entropy.  This theorem may also be viewed as a
handy operational interpretation of the rather abstract notion of
relative entropy.

The substate theorem (classical or quantum) lies at the heart of a
growing number of applications~\cite[Section~1]{JainRS09}. These
include privacy trade-offs in communication protocols for computing
relations~\cite{JainRS05}, message compression leading to direct sum
theorems in classical and quantum communication
complexity~\cite{JainRS05}, impossibility results for bit-string
commitment~\cite{Jain08}, the communication complexity of remote state
preparation~\cite{Jain06}, and direct product theorems for classical
communication complexity~\cite{JainKN08,Jain11}.  To highlight one of
these examples, the Quantum Substate Theorem enables (non-oblivious)
compression of an ensemble of mixed quantum states to within a
constant factor of the Holevo information of the ensemble, given
access to shared entanglement and classical communication, when we are
allowed a small loss of fidelity in the compression process. In
contrast, the compression of arbitrary ensembles of mixed quantum
states to the Holevo limit remains an open problem in quantum
information theory.

Jain \etal{} formulated their bound in terms of a new information
theoretic quantity, \emph{observational divergence\/}~$\sD(\rho \|
\sigma)$, rather than relative entropy.
\begin{definition}[Observational divergence]
\label{def:div}
Let $\rho, \sigma \in \rD(\cH)$. Their \emph{observational
  divergence\/} is defined as
\begin{displaymath}
\sD(\rho \| \sigma) \quad \defeq \quad 
    \sup \left\{ 
        (\tr \, M \rho) \log_2 \frac{\tr \, M \rho}{\tr \, M \sigma} 
        \;:\;  M \in \rP(\cH), \; \tr \, M \sigma  \neq 0
    \right\} \enspace.
\end{displaymath}
\end{definition}
The supremum in the definition above is achieved if and only
if~$\support(\rho) \subseteq \support(\sigma)$. As is evident, this
quantity is a scaled measure of the maximum factor by
which~$\tr(M\rho)$ may exceed~$\tr(M\sigma)$ for any measurement
outcome~$M$ of interest.  Observational divergence is related to
relative entropy. In particular, $\sD(\rho \| \sigma) \leq \sS(\rho \|
\sigma) + 1$. However, it could be smaller than relative entropy by a
factor proportional to the dimension~\cite[Proposition~4]{JainRS09}
(see also~\cite{JainNS10}).

We present alternative proofs of the Quantum Substate Theorem, also
strengthening it in the process.
\begin{theorem}
\label{thm-substate}
Let~$\cH$ be a Hilbert space, and let $\rho, \sigma \in \rD(\cH)$ be
quantum states such that $\support(\rho) \subseteq \support(\sigma)$.
For any~$\eps \in (0,1)$, there is a quantum state~$\rho'$ with
fidelity~$\sF(\rho', \rho) \geq 1 - \eps$ such that~$\rho' \preceq
\kappa \sigma$, where
\[
\kappa \quad = \quad \frac{1}{1-\eps}\; 2^{\sD(\rho\|\sigma)/\eps}
\enspace.
\]
Equivalently, for any~$\eps \in (0,1)$,
\[
\sS^\eps_\infty(\rho \| \sigma) \quad \leq \quad 
    \frac{\sD(\rho \| \sigma)}{\eps} + \log_2 \frac{1}{1-\eps} \enspace.
\]
\end{theorem}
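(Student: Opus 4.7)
My plan is to prove the theorem via a minimax reduction followed by an explicit truncation of~$\rho$ in the eigenbasis of the POVM being tested.

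I first recast the conclusion as the inequality
\[
\min_{\rho' \in \sS_\eps} \sup_{M \in \rP(\cH)} \tr M(\rho' - \kappa \sigma) \;\leq\; 0,
\]
where $\sS_\eps \defeq \set{ \rho'' \in \rD(\cH) : \sF(\rho'', \rho) \geq 1-\eps }$. This set is convex (since the fidelity is jointly concave, hence concave in its first argument) and compact, and $\rP(\cH)$ is also convex and compact; the objective is bilinear. Sion's minimax theorem therefore permits me to swap the min and the sup, so it suffices to show that for every $M \in \rP(\cH)$ there exists a $\rho' \in \sS_\eps$ with $\tr M \rho' \leq \kappa \tr M \sigma$.

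Next I fix $M$ and construct $\rho'$ by a truncation in the eigenbasis of $M$. Writing $M = \sum_i \mu_i \ketbra{e_i}{e_i}$, set $\lambda_i \defeq \bra{e_i} \rho \ket{e_i}$, $\sigma_i \defeq \bra{e_i} \sigma \ket{e_i}$, and $\kappa_0 \defeq 2^{\sD(\rho\|\sigma)/\eps}$, and identify the ``bad'' indices $S \defeq \set{i : \lambda_i > \kappa_0 \sigma_i}$ with projector $\Pi_S \defeq \sum_{i \in S} \ketbra{e_i}{e_i}$. I propose
\[
\rho' \;\defeq\; \frac{(\rI - \Pi_S)\,\rho\,(\rI - \Pi_S)}{1 - \tr \Pi_S \rho}.
\]
A short purification argument gives $\sF(\rho', \rho) \geq 1 - \tr \Pi_S \rho$. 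Summing $\lambda_i > \kappa_0 \sigma_i$ over $i \in S$ yields $\tr \Pi_S \rho > \kappa_0 \tr \Pi_S \sigma$, and applying the definition of observational divergence to the POVM element $\Pi_S$ gives $\tr \Pi_S \rho \cdot \log \kappa_0 \leq \sD(\rho\|\sigma)$; with $\log \kappa_0 = \sD(\rho\|\sigma)/\eps$ this rearranges to $\tr \Pi_S \rho \leq \eps$, establishing the fidelity bound. For the operator inequality, $\rI - \Pi_S$ commutes with~$M$, so
\[
\tr M \rho' \;=\; \frac{\sum_{i \notin S} \mu_i \lambda_i}{1 - \tr \Pi_S \rho} \;\leq\; \frac{\kappa_0}{1-\eps} \sum_{i \notin S} \mu_i \sigma_i \;\leq\; \kappa \tr M \sigma,
\]
using $\lambda_i \leq \kappa_0 \sigma_i$ for $i \notin S$ by definition of~$S$.

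The step I expect to be the main obstacle is identifying the correct truncation. A tempting first attempt --- taking $\Pi$ to be the support projector of $(\rho - \kappa \sigma)_+$ and replacing $\rho$ by $(\rI - \Pi)\rho(\rI - \Pi)$ --- fails, because $(\rI - \Pi)\sigma(\rI - \Pi)$ need not be majorized by $\sigma$ in the L\"owner order (the off-diagonal cross-terms prevent this), and so the resulting state is not controlled by $\kappa \sigma$. Working instead in the eigenbasis of the particular POVM $M$ that the minimax quantifies over simultaneously diagonalizes $M$, $\Pi_S$, and the truncation map $X \mapsto (\rI - \Pi_S) X (\rI - \Pi_S)$, which makes the argument collapse to a clean classical-style counting on the diagonal.
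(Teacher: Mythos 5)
Your proposal is correct and follows essentially the same route as the paper's first proof: a minimax swap (the paper uses a Kakutani-based minimax theorem where you invoke Sion's, and tests against $\set{M \succeq 0 : \tr M\sigma \leq 1}$ where you test against $\rP(\cH)$, but these are interchangeable here) followed by exactly the paper's Lemma~\ref{lem:forM} construction --- truncating $\rho$ in the eigenbasis of the fixed $M$, bounding the weight of the bad subspace by $\eps$ via observational divergence, and controlling fidelity by the gentle-measurement argument of Lemma~\ref{lem:eps}. No gaps.
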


The proofs that we present are both shorter and conceptually simpler
than the original proof. The proof due to Jain \etal{} consists of a
number of technical steps, several of which are bundled into a
``divergence lifting'' theorem that reduces the problem to one in
which~$\rho$ is a pure state (a rank one quantum state). Finally, the
pure state case is translated into a problem in two dimensions which
is solved by a direct calculation. Divergence lifting involves going
from a construction of a suitable state~$\rho'$ for a fixed POVM
element to one that is independent of the POVM element, by appealing
to a minimax theorem from Game Theory. We show that this minimax
theorem can be applied directly to establish the Quantum Substate
Theorem. The resulting statement is stronger in its dependence on
observational divergence. The original bound read
as~$\sS^\eps_\infty(\rho \| \sigma) \leq d'/\eps - \log_2 (1-\eps)$,
where~$d' \defeq d + 4 \sqrt{d+2} + 2 \log_2 (d+2) + 6$ with~$d \defeq
\sD(\rho\|\sigma)$. The formulation in terms of fidelity also allows
us to show that the dependence on observational divergence in
Theorem~\ref{thm-substate} is optimal up to a constant factor.
\begin{theorem}
\label{thm-converse}
Let~$\cH$ be a Hilbert space, and let $\rho, \sigma \in \rD(\cH)$ be
quantum states such that $\support(\rho) \subseteq \support(\sigma)$.
Suppose~$k \in \reals$ is such that for any~$\eps \in (0,1)$, there is
a quantum state~$\rho'$ with fidelity~$\sF(\rho', \rho) \geq 1 - \eps$
such that~$\rho' \preceq \kappa \sigma$, where
\[
\kappa \quad = \quad \frac{1}{1-\eps}\; 2^{k/\eps}
\enspace, 
\]
or equivalently,
\[
\sS^\eps_\infty(\rho \| \sigma) \quad \leq \quad 
    \frac{k}{\eps} + \log_2 \frac{1}{1-\eps} \enspace.
\]
Then~$\sD(\rho \| \sigma) \leq 4k + 3$.
\end{theorem}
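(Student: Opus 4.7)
The plan is to bound the observational divergence pointwise: for every POVM element $M \in \rP(\cH)$ with $q \defeq \tr M\sigma > 0$, I will show $p \log_2(p/q) \leq 4k + 3$, where $p \defeq \tr M\rho$. Only the case $p > q$ is nontrivial, since otherwise the term is nonpositive. The key idea is to reduce the quantum problem to a two-outcome classical statement via the binary measurement $\{M, \rI - M\}$.

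For a parameter $\eps \in (0, p)$ to be chosen later, the hypothesis yields a state $\rho'$ with $\sF(\rho', \rho) \geq 1 - \eps$ and $\rho' \preceq \kappa\,\sigma$, where $\kappa = 2^{k/\eps}/(1-\eps)$. Setting $p' \defeq \tr M\rho'$, the substate condition gives the upper bound $p' \leq \kappa q$. For a matching lower bound, apply the measurement $\{M, \rI - M\}$ to $\rho$ and $\rho'$: monotonicity of fidelity under quantum channels gives
\[
\sqrt{p\,p'} + \sqrt{(1-p)(1-p')} \;\geq\; \sqrt{\sF(\rho', \rho)} \;\geq\; \sqrt{1-\eps} \enspace.
\]
Using $\sqrt{(1-p)(1-p')} \leq \sqrt{1-p}$ and the assumption $\eps < p$, this rearranges to $p' \geq (\sqrt{1-\eps} - \sqrt{1-p})^2/p$. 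Combining with $p' \leq \kappa q$ then yields
\[
q \;\geq\; \frac{(1-\eps)\,\bigl(\sqrt{1-\eps} - \sqrt{1-p}\bigr)^2}{p \cdot 2^{k/\eps}} \enspace.
\]

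To extract a bound of the form $p \log_2(p/q) \leq 4k + O(1)$, I would pick $\eps$ proportional to $p$; concretely, $\eps = p/4$. Using the identity $\sqrt{1-p/4} - \sqrt{1-p} = (3p/4)/(\sqrt{1-p/4} + \sqrt{1-p})$ and the AM-GM estimate $(\sqrt{1-p/4} + \sqrt{1-p})^2 \leq 4 - 5p/2$, a direct calculation gives $q \geq 9p(4-p)/((256 - 160p)\,2^{4k/p})$, so taking logarithms,
\[
p\log_2(p/q) \;\leq\; 4k + p\log_2\!\frac{256 - 160p}{9(4-p)} \enspace.
\]
Since $p \mapsto (256-160p)/(9(4-p))$ is decreasing on $[0,1]$ and equals $64/9$ at $p = 0$, the residual term is at most $\log_2(64/9) < 3$, delivering the claim.

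The main technical hurdle is choosing the parameterization $\eps(p)$: the ratio $\eps/p$ determines the coefficient in front of $k$ (the choice $\eps = p/4$ produces the factor $4k$), while the additive constant depends on how carefully one tracks the fidelity-to-$p'$ estimate. A minor subtlety is that $\eps$ depends on the POVM element $M$, so we invoke the hypothesis at a different smoothing parameter for each $M$; this is legitimate since the hypothesis holds for every $\eps \in (0, 1)$.
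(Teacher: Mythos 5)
Your proposal is correct and follows essentially the same route as the paper: bound the divergence term for each POVM element $M$ separately, invoke the hypothesis at $\eps = \tr(M\rho)/4$, lower-bound $\tr(M\rho')$ via monotonicity of fidelity under the binary measurement $\{M, \rI - M\}$, and combine with the substate condition $\tr(M\rho') \leq \kappa\,\tr(M\sigma)$. The only difference is that you replace the paper's trigonometric lemma (Lemma~\ref{lem-prob}, which gives $\tr(M\rho') \geq \delta/4$) with a direct algebraic estimate yielding the slightly weaker $\tr(M\rho') \geq 9p/(64-40p)$, which still lands on the same constant $4k+3$.
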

We thus settle two questions posed by Jain
\etal~\cite[Section~5]{JainRS09}.

For the first proof (Section~\ref{sec-proof}), we start by converting
the convex minimization problem (P1) into a min-max problem through a
simple duality argument. The minimax theorem now applies and reduces
the problem of construction of a suitable state~$\rho'$ to one that
works for a fixed POVM element. The latter task turns out to be
similar to proving the Classical Substate Theorem.  This proof is thus
shorter and conceptually simpler than the original one, and also leads
to a tighter dependence on observational divergence. We present a
second proof based on semi-definite programming (SDP) duality
(Section~\ref{sec-sdp}). We believe that both approaches have their
own merits. The first approach is more intuitive in that once the
problem is formulated as a min-max program, the subsequent steps
emerge naturally. The second approach has the appeal of relying on the
more standard SDP duality.  These routes to the theorem may prove
useful in its burgeoning list of applications, as also in the study of
smooth relative min-entropy.

\section{A proof based on min-max duality}
\label{sec-proof}

In this section, we present an alternative proof of the Quantum
Substate Theorem. It hinges on a powerful minimax theorem from game
theory, which is a consequence of the Kakutani fixed point theorem in
real analysis~\cite[Propositions~20.3 and~22.2]{OsborneR94}.
\begin{theorem}
\label{fact:minimax}
Let $A_1,A_2$ be non-empty, convex and compact subsets of $\reals^n$
for some positive integer~$n$. Let $u: A_1 \times A_2 \rightarrow
\reals$ be a continuous function such that
\begin{itemize}
\item $\forall a_2 \in A_2$, the set 
$\{a_1 \in A_1 \;:\; (\forall a_1' \in A_1) \;
               u(a_1,a_2) \geq u(a_1',a_2)\}$ is convex, i.e., for
every~$a_2 \in A_2$, the set of points $a_1 \in A_1$ such that
$u(a_1,a_2)$ is maximum is a convex set; and

\item  $\forall a_1 \in A_1$, the set 
$\{a_2 \in A_2 \;:\; (\forall a_2' \in A_2) \;
                u(a_1,a_2) \leq u(a_1,a_2')\}$ is convex, i.e., for
every $a_1 \in A_1$, the set of points $a_2 \in A_2$, such that
$u(a_1,a_2)$ is minimum is a convex set.
\end{itemize}
Then, there is an $(a_1^\ast, a_2^\ast) \in A_1 \times A_2$ such that 
\begin{displaymath}
\max_{a_1\in A_1}\, \min_{a_2\in A_2} u(a_1,a_2) 
\quad = \quad u(a_1^\ast, a_2^\ast)
\quad = \quad \min_{a_2 \in A_2}\, \max_{a_1 \in A_1} u(a_1,a_2) \enspace.
\end{displaymath}
\end{theorem}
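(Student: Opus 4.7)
The plan is to deduce the minimax equality from the Kakutani fixed point theorem by constructing a best-response correspondence whose fixed points are precisely the saddle points of~$u$. Define $B_1(a_2) \defeq \{a_1 \in A_1 : u(a_1,a_2) = \max_{a_1' \in A_1} u(a_1',a_2)\}$ and $B_2(a_1) \defeq \{a_2 \in A_2 : u(a_1,a_2) = \min_{a_2' \in A_2} u(a_1,a_2')\}$, and let $B(a_1,a_2) \defeq B_1(a_2) \times B_2(a_1)$, viewed as a set-valued map from the non-empty, convex, compact set $A_1 \times A_2 \subseteq \reals^{2n}$ to itself. A fixed point of~$B$ is exactly a pair $(a_1^\ast, a_2^\ast)$ in which each coordinate is a best response to the other, which is by definition a saddle point of~$u$.

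The main technical step is to verify the hypotheses of Kakutani's theorem for~$B$. Non-emptiness and compactness of each value $B_1(a_2)$ and $B_2(a_1)$ are immediate from Weierstrass's theorem applied to the continuous function~$u$ on the compact sets~$A_1$ and~$A_2$; convexity of these values is precisely the content of the two bullets in the hypothesis, so the product values of~$B$ are non-empty, convex, and compact. The remaining requirement is that~$B$ have a closed graph, which I would prove by a direct sequential argument: if $a_2^{(k)} \to a_2$ with $a_1^{(k)} \to a_1^\dagger$ and $a_1^{(k)} \in B_1(a_2^{(k)})$ for all~$k$, then for every $a_1' \in A_1$, continuity of~$u$ yields $u(a_1^\dagger, a_2) = \lim_k u(a_1^{(k)}, a_2^{(k)}) \geq \lim_k u(a_1', a_2^{(k)}) = u(a_1', a_2)$, so $a_1^\dagger \in B_1(a_2)$; the argument for~$B_2$ is symmetric. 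Kakutani's theorem then supplies $(a_1^\ast, a_2^\ast) \in B(a_1^\ast, a_2^\ast)$.

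Finally, I would extract the minimax equality from this fixed point. From $a_1^\ast \in B_1(a_2^\ast)$ we get $\min_{a_2} \max_{a_1} u(a_1, a_2) \leq \max_{a_1} u(a_1, a_2^\ast) = u(a_1^\ast, a_2^\ast)$, and from $a_2^\ast \in B_2(a_1^\ast)$ we get $u(a_1^\ast, a_2^\ast) = \min_{a_2} u(a_1^\ast, a_2) \leq \max_{a_1} \min_{a_2} u(a_1, a_2)$. Combining these with the always-valid weak-duality inequality $\max_{a_1} \min_{a_2} u \leq \min_{a_2} \max_{a_1} u$ forces all four quantities to coincide with $u(a_1^\ast, a_2^\ast)$, which is exactly the claim.

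The hardest step is the closed-graph verification for the best-response correspondences, since it is the only place where continuity of~$u$ plays a substantive role beyond its use in invoking Weierstrass; it is essentially the content of Berge's maximum theorem specialized to the argmax and argmin correspondences. Once Kakutani applies, the passage to the minimax equality is purely algebraic bookkeeping and does not use the convexity hypotheses any further.
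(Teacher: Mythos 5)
Your proof is correct and follows exactly the route the paper indicates: it does not prove Theorem~\ref{fact:minimax} itself but cites Propositions~20.3 and~22.2 of Osborne and Rubinstein, which are precisely the Kakutani-based existence of a best-response fixed point (Nash equilibrium) and the algebraic passage from such a saddle point to the minimax equality that you reconstruct. Your verification of the closed-graph and convex-valuedness hypotheses and the final extraction of the equality are all sound.
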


\suppress{The above statement follows by combining Proposition~20.3
  and Proposition~22.2 of Osborne and Rubinstein's book on game
  theory~\cite[pages 19--22]{OsborneR94}.  Proposition~20.3 shows the
  existence of Nash equilibrium (such as~$(a_1^\ast, a_2^\ast)$) in
  strategic games, and Proposition~22.2 connects Nash equilibrium and
  the min-max theorem for games defined using a pay-off function (such
  as $u$).  }

We start with the following lemma which bounds the distance between a
quantum state and its normalized projection onto a subspace in which it has
``large'' support. It is a variant of the ``gentle measurement lemma''
due to Winter~\cite{Winter99}.
\begin{lemma}
\label{lem:eps}
Let $\rho \in \rD(\cH)$ be a quantum state in the Hilbert space
$\cH$. Let $\Pi$ be an orthogonal projection onto a subspace of $\cH$
such that $\tr\, \Pi \rho = \delta < 1$. Let $\rho'' = (\rI - \Pi)
\rho (\rI - \Pi)$ be the projection of~$\rho$ onto the orthogonal
subspace, and let~$\rho' = \frac{\rho''}{\tr \, \rho''}$ be this state
normalized. Then $\sF(\rho,\rho') \geq 1 - \delta$.
\end{lemma}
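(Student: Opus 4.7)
The plan is to use Uhlmann's theorem, which states that the fidelity between two density operators equals the maximum squared overlap between their purifications, taken over all such purifications (equivalently, for any fixed purification of one state, the maximum over purifications of the other). This reduces the estimate to a simple inner product computation.

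First I would observe the normalization: since $(\rI-\Pi)^2 = \rI-\Pi$, we have $\tr\,\rho'' = \tr\bigl((\rI-\Pi)\rho\bigr) = 1-\delta$, so $\rho' = \rho''/(1-\delta)$. Next, fix a purification $\ket{\psi} \in \cH\otimes\cK$ of~$\rho$ in some auxiliary space~$\cK$. Then the (sub-normalized) vector $\bigl((\rI-\Pi)\otimes \rI_\cK\bigr)\ket{\psi}$ has reduced state~$\rho''$ on~$\cH$, so after normalization,
\[
\ket{\psi'} \quad \defeq \quad \frac{1}{\sqrt{1-\delta}}\,\bigl((\rI-\Pi)\otimes \rI_\cK\bigr)\ket{\psi}
\]
is a purification of $\rho'$.

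Then I would apply Uhlmann: $\sF(\rho,\rho') \geq \size{\braket{\psi}{\psi'}}^2$. A direct computation gives
\[
\braket{\psi}{\psi'} \quad = \quad \frac{1}{\sqrt{1-\delta}}\,\bra{\psi}\bigl((\rI-\Pi)\otimes \rI_\cK\bigr)\ket{\psi} \quad = \quad \frac{\tr\bigl((\rI-\Pi)\rho\bigr)}{\sqrt{1-\delta}} \quad = \quad \sqrt{1-\delta},
\]
so $\sF(\rho,\rho') \geq 1-\delta$, as required.

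There is no substantive obstacle here; the only subtlety is picking the right tool. Trying to bound $\trnorm{\sqrt{\rho}\sqrt{\rho'}}^2$ directly would force one to manipulate square roots of operators that need not commute with~$\Pi$, which is awkward. Passing to purifications sidesteps this entirely, because $(\rI-\Pi)\otimes \rI_\cK$ acts linearly on the purifying vector and the inner product collapses to a single trace. I would also briefly note the edge case $\delta=0$ (where $\rho''=\rho$ and the statement is trivial) to confirm the normalization step is well-defined in the range of interest.
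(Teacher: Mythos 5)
Your proof is correct and is essentially the paper's proof: both pass to a purification of $\rho$, apply $\rI-\Pi$ on the appropriate factor to obtain a purification of $\rho'$, and reduce the bound to the overlap $\size{\braket{\psi}{\psi'}}^2 = 1-\delta$. The only cosmetic difference is that you invoke the easy direction of Uhlmann's theorem where the paper cites monotonicity of fidelity under the partial trace; these are the same fact.
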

\begin{proof}
Let $\cK$ be a Hilbert space with $\dim(\cK) =
\dim(\cH)$. Let~$\ket{v} \in \cK \otimes \cH$ be a purification
of~$\rho$~\cite{NielsenC00}.
\suppress{ , that is $\ket{v}$ is a unit vector such that~$\tr_\cK \,
  \density{v} = \rho$ (such a purification always exists
  when~$\dim(\cK) = \dim(\cH)$.)}
Let $\ket{v''} = (\rI \otimes (\rI - \Pi)) \ket{v}$.  Let $\ket{v'} =
\ket{v''}/\norm{ v'' }$.  Observe that $\tr_{\cK}\, \ketbra{v''}{v''}
= \rho''$, so
\[
\norm{v''}^2 \quad = \quad \tr\, \ketbra{v''}{v''} 
    \quad = \quad \tr\, \rho'' \quad = \quad \tr\, \rho - \tr\, \Pi \rho 
    \quad = \quad 1 - \delta
                  \enspace,
\]
and~$\tr_\cK\, \ketbra{v'}{v'} = \rho'$. Now,
\begin{align*}
\sF(\rho, \rho') 
    \quad & \geq \quad \sF(\ketbra{v}{v}, \ketbra{v'}{v'}) \\
    & = \quad |\braket{v}{v'}|^2  \\
    & = \quad  \norm{v''}^2 \quad = \quad 1 - \delta 
          \enspace,
\end{align*}
where the first inequality follows from the monotonicity of fidelity
under completely positive trace preserving (CPTP)
operations~\cite{NielsenC00}.
\end{proof}

The next lemma is an important step in the proof, and along with the
minimax theorem (Theorem~\ref{fact:minimax}) yields the Quantum
Substate Theorem. It mimics the proof of the Classical Substate
Theorem with respect to a particular operator~$M \succeq 0$, which may
be viewed as an unnormalized POVM element. Namely, we decompose~$M$
into its diagonal basis, and imagine measuring with respect to this
basis. If the observational divergence of~$\rho$ with respect
to~$\sigma$ is small, then for most of the basis elements, the
probability of the outcome for~$\rho$ is not too large relative to the
probability for~$\sigma$. Projecting~$\rho$ onto the space spanned by
these basis elements gives us a state~$\rho'$, close to $\rho$, for
which~$\tr\, M \rho'$ is correspondingly bounded, relative to~$\tr\, M
\sigma$.
\begin{lemma}
\label{lem:forM}
Suppose $\rho, \sigma \in \rD(\cH)$ are quantum states in the Hilbert
space~$\cH$ such that $\support(\rho) \subseteq \support(\sigma)$. Let
$d = \sD(\rho \| \sigma)$, $\eps \in (0,1)$, and~$M \succeq 0$ be an
operator on~$\cH$. There exists a quantum state~$\rho' \in \rD(\cH)$
such that $\sF(\rho', \rho) \geq 1 - \eps$ and
\[
(1-\eps) \cdot \tr \, M \rho' 
    \quad \leq \quad 2^{d/\eps} \cdot \tr \, M \sigma \enspace .
\]
\end{lemma}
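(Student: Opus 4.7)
The strategy is to mimic, at the operator level, the classical proof of the Substate Theorem applied to the eigenbasis of~$M$. Since $M \succeq 0$ is Hermitian, diagonalize it as $M = \sum_i \lambda_i \ketbra{i}{i}$ with $\lambda_i \geq 0$, and set $p_i \defeq \bra{i}\rho\ket{i}$, $q_i \defeq \bra{i}\sigma\ket{i}$. In this basis $\rho$ and $\sigma$ induce probability distributions whose ratios $p_i/q_i$ encode how well~$\sigma$ masquerades as $\rho$ against the ``diagonal'' measurement outcomes $\ketbra{i}{i}$. The key idea is to discard the indices where this ratio is too large, and show that the resulting truncation of~$\rho$ is both close to~$\rho$ in fidelity and has small overlap with~$M$ relative to~$\sigma$.

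Concretely, call an index $i$ \emph{bad} if $q_i > 0$ and $p_i > 2^{d/\eps} q_i$, and let $B$ denote the bad set, $\Pi \defeq \sum_{i \in B} \ketbra{i}{i}$, $P \defeq \tr \, \Pi \rho$, $Q \defeq \tr \, \Pi \sigma$. The support hypothesis $\support(\rho) \subseteq \support(\sigma)$ forces $p_i = 0$ whenever $q_i = 0$, so the inequality $p_i \leq 2^{d/\eps} q_i$ in fact holds for every $i \notin B$. The first main step is to bound $P$: if $B$ is non-empty then $Q > 0$, and summing the defining inequality over $i \in B$ gives $P > 2^{d/\eps} Q$, hence $\log_2(P/Q) > d/\eps$. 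On the other hand, $\Pi \in \rP(\cH)$ is a legitimate POVM element, so applying the definition of observational divergence yields $P \log_2(P/Q) \leq d$. Combining these forces $P < \eps$. (If $B$ is empty, one may simply take $\rho' \defeq \rho$, and the bound follows directly from step~3 below.)

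Next, define $\rho'' \defeq (\rI - \Pi)\rho(\rI - \Pi)$ and $\rho' \defeq \rho''/\tr\, \rho''$. Lemma~\ref{lem:eps} applied with $\delta = P$ gives $\sF(\rho, \rho') \geq 1 - P > 1 - \eps$, handling the fidelity requirement. For the substate-like bound, observe that $\Pi$ commutes with $M$ (both are diagonal in $\ket{i}$), so
\[
\tr \, M \rho'' \quad = \quad \sum_{i \notin B} \lambda_i p_i
    \quad \leq \quad 2^{d/\eps} \sum_{i \notin B} \lambda_i q_i
    \quad \leq \quad 2^{d/\eps} \, \tr \, M \sigma \enspace.
\]
Since $\tr \, M \rho' = \tr \, M \rho'' / (1-P)$ and $1 - P > 1 - \eps$, we conclude $(1-\eps) \tr \, M \rho' \leq (1-P) \tr \, M \rho' = \tr \, M \rho'' \leq 2^{d/\eps} \tr \, M \sigma$, as required.

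\textbf{Where the work is.} There are no deep obstacles; the proof is essentially a careful choice of truncation. The one place that requires attention is verifying that the definition of observational divergence actually applies to the projection $\Pi$ — in particular that $\tr\, \Pi \sigma \neq 0$ whenever $B$ is non-empty — which in turn relies on the support hypothesis together with the defining inequality for bad indices. Once this is in place, the bound $P < \eps$ drops out immediately, and the rest is a direct computation using that $M$ and $\Pi$ share an eigenbasis.
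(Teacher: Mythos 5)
Your proof is correct and follows essentially the same route as the paper: diagonalize $M$, discard the eigenvectors where $\bra{i}\rho\ket{i}$ exceeds $2^{d/\eps}\bra{i}\sigma\ket{i}$, bound the discarded mass by $\eps$ using the observational divergence of the projector $\Pi$, and invoke Lemma~\ref{lem:eps} for the fidelity. The only differences are points of added rigor (treating $B=\emptyset$ separately and checking $\tr\,\Pi\sigma\neq 0$ before applying Definition~\ref{def:div}), which the paper leaves implicit.
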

\begin{proof}
Consider~$M$ in its diagonal form~$M = \sum_{i=1}^{\dim(\cH)} p_i
\ket{v_i}\bra{v_i}$, where the~$(p_i)$ are the eigenvalues of~$M$
corresponding to the orthonormal eigenvectors~$(\ket{v_i})$. Let
\[
B \quad \defeq \quad 
    \set{ i \;:\;  \bra{v_i} \rho \ket{v_i} > 
          2^{d/\eps} \cdot \bra{v_i} \sigma \ket{v_i}, \;
          1 \leq i \leq \dim(\cH) } \enspace .
\]
Let $\Pi = \sum_{i \in B} \density{v_i}$ be the projector onto the
space spanned by vectors specified by $B$. Then $\tr \, \Pi \rho >
2^{d/\eps} \cdot \tr \, \Pi \sigma$ and hence,
\[ 
d \quad \geq \quad \left( \tr\, \Pi \rho \right) 
                    \log_2 \frac{\tr\, \Pi \rho}{\tr\, \Pi \sigma} 
  \quad > \quad \left( \tr \, \Pi \rho \right) \cdot \frac{d}{\eps} 
                 \enspace .
\]
This implies that~$ \tr\, \Pi \rho < \eps$. Let $\rho'' = (\rI - \Pi)
\rho (\rI - \Pi)$ and $\rho' = \frac{\rho''}{\tr {\rho''}} \prec
\frac{\rho''}{1-\eps}$. From Lemma~\ref{lem:eps} we have $\sF(\rho,
\rho') \geq 1 - \eps$. Finally, by the definition of~$\Pi$,
\begin{align*}
(1-\eps) \cdot \tr\, M \rho' 
    \quad & \leq \quad \tr \, M \rho'' \\
          & = \quad \sum_{i \not\in B} p_i \bra{v_i} \rho \ket{v_i} \\
          & \leq \quad 2^{d/\eps} \sum_{i \not\in B} p_i \bra{v_i} 
                       \sigma \ket{v_i} \\
          & \leq \quad 2^{d/\eps} \cdot \tr\, M \sigma \enspace.
\end{align*}
\end{proof}

We now prove the main result, Theorem~\ref{thm-substate}. For this, it
suffices to produce a state close to~$\rho$ that when scaled suitably
is a substate of~$\sigma$.  The condition~$\rho' \preceq \kappa\,
\sigma$ is equivalent to~$\tr\, M\rho' \leq \kappa$ for all~$M \succeq
0$ with $\tr \, M \sigma \leq 1$. We use this dual view of the
substate condition to convert the minimization problem (P1) into a
min-max optimization problem.  We then use the minimax theorem,
Theorem~\ref{fact:minimax}, to drastically simplify the search for a
suitable state~$\rho'$. As a consequence, it suffices to produce a
state~$\rho'$ close to $\rho$ such that~$\tr\, M\rho' \leq \kappa$ for
an arbitrary but fixed~$M \succeq 0$ with $\tr \, M \sigma \leq 1$.

\medskip

\begin{proofof}{Theorem~\ref{thm-substate}}
We first massage the program (P1) into a form to which
Theorem~\ref{fact:minimax} applies.  If a pair~$\rho',\kappa$ are
feasible for (P1), then~$\support(\rho') \subseteq \support(\sigma)$. By
taking~$\cH = \support(\sigma)$ if necessary, we may therefore assume
that~$\sigma \succ 0$, i.e., $\sigma$ has full support.  It is
straightforward to check that for any given $\rho' \in \rD(\cH)$,
\[
\min_{\kappa ~ : ~ \rho' \preceq \, \kappa \sigma } \kappa  
\quad = \quad \max_{M \succeq 0 ~:~ \tr\, M \sigma \, \leq \, 1} \;
\tr\, M \rho' \enspace.
\]
Hence we may rewrite~$\sS^\eps_\infty(\rho \| \sigma)$ as the base-$2$
logarithm of
\[
\min_{\substack{\rho' \succeq 0 ~:~
                \tr \,\rho' =1,\\
                \sF(\rho', \rho) \,\geq\,  1- \eps
               }
} \quad 
\max_{M \succeq 0 ~:~ \tr \, M \sigma \,\leq\, 1 }
\quad \tr\, M \rho' \enspace.
\]
\suppress{
All hypotheses of Theorem~\ref{fact:minimax} are satisfied by the
above optimization problem, except that the feasible region is of the
form~$A_1' \times A_2'$, where~$A_1', A_2'$ are subsets of a
\emph{complex\/} vector space isomorphic to~$\complex^n$, with~$n =
\dim(\cH)^2$. We map this to an optimization problem over the real
vector space~$\reals^m$ with~$m = 2n$ via the natural map~$f :
\complex^n \rightarrow \reals^{m}$ given by $f(a_1+ \complexi b_1,
\dotsc, a_n + \complexi b_n) = (a_1, b_1, \dotsc, a_n, b_n)$. The
map~$f$ is a bijection, is continuous, and has a continuous
inverse. Moreover, $f, f^{-1}$ are both group homomorphisms with
respect to vector addition, and linear with respect to multiplication
by reals.}
Viewing~$\rho'$ and~$M$ as elements of the real vector space of
Hermitian operators in~$\rL(\cH)$, noting that fidelity is concave
in each of its arguments~\cite{NielsenC00} and that the trace function is
bilinear, we may apply
Theorem~\ref{fact:minimax} to the resulting optimization problem. We
get
\begin{align*}
\lefteqn{2^{\sS^\eps_\infty(\rho \| \sigma)}} \\
& \quad = \quad 
\max_{M \,\succeq\, 0 ~:~ \tr \, M \sigma \,\leq\, 1} 
\quad 
\min_{\substack{ \rho' \,\succeq\, 0 ~:~
                 \tr \,\rho' \,=\, 1, \\ 
                 \sF(\rho', \rho) \,\geq\, 1 - \eps
               }
} \quad \tr \, M \rho' \enspace .
\end{align*}
By Lemma~\ref{lem:forM}, for every~$M \succeq 0$ with~$\tr\, M\sigma
\leq 1$, there is a quantum state~$\rho'$, with $\sF(\rho',\rho) \geq
1 - \eps$, such that~$(1 - \eps) \, \tr\, M \rho' \leq 2^{d/\eps}$,
where~$d = \sD(\rho \| \sigma)$.  The desired result now follows.
\end{proofof}

Combining Theorem~\ref{thm-substate} and the Uhlmann
theorem~\cite{NielsenC00} immediately gives us the following
statement. The Quantum Substate Theorem is often used in this form in
its applications.
\begin{corollary} 
Let~$\cH, \cK$ be Hilbert spaces with~$\dim(\cK) \geq \dim(\cH)$, and
let $\rho, \sigma \in \rD(\cH)$ be quantum states such that
$\support(\rho) \subseteq \support(\sigma)$.  Let $d = \sD(\rho \|
\sigma)$, $\eps \in (0,1)$, and~$\ket{v} \in \cK \otimes \cH$ be a
purification of~$\rho$. Then there is a pure state~$\ket{v'} \in \cK
\otimes \cH$ with $\sF(\density{v}, \density{v'}) \geq 1 - \eps$, and
a pure state~$\ket{w'} \in \cK \otimes \cH$ such that
$\ket{w} \in \mathbb{C}^2 \otimes \cK \otimes \cH$ defined as
\[
\ket{w} \quad \eqdef \quad 
    \sqrt{\alpha} \; \ket{0}\ket{v'} + \sqrt{1-\alpha}\; \ket{1}\ket{w'}
    \enspace,
\]
with~$\alpha = (1-\eps)2^{-d/\eps}$, is a purification of~$\sigma$.
\end{corollary}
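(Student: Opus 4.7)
The plan is to reduce the corollary to an application of Theorem~\ref{thm-substate} together with the Uhlmann theorem. First I would apply Theorem~\ref{thm-substate} with parameter $\eps$ to obtain a quantum state $\rho' \in \rD(\cH)$ satisfying $\sF(\rho', \rho) \geq 1 - \eps$ and $\rho' \preceq \kappa\, \sigma$, where $\kappa = \frac{1}{1-\eps}\, 2^{d/\eps}$. Setting $\alpha \defeq 1/\kappa = (1-\eps)\, 2^{-d/\eps}$, the substate inequality is equivalent to $\sigma - \alpha \rho' \succeq 0$. Since $\tr(\sigma - \alpha \rho') = 1 - \alpha$, the operator $\tau \defeq (\sigma - \alpha \rho')/(1-\alpha)$ is a bona fide quantum state in $\rD(\cH)$, and we obtain the convex decomposition $\sigma = \alpha\, \rho' + (1-\alpha)\, \tau$.

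Next I would lift this classical decomposition to the level of purifications. By the Uhlmann theorem, since $\dim(\cK) \geq \dim(\cH)$ and $\ket{v} \in \cK \tensor \cH$ is a purification of $\rho$, there exists a purification $\ket{v'} \in \cK \tensor \cH$ of $\rho'$ whose inner product with $\ket{v}$ realizes the fidelity, i.e., $\sF(\density{v}, \density{v'}) = |\braket{v}{v'}|^2 = \sF(\rho, \rho') \geq 1 - \eps$. Separately, let $\ket{w'} \in \cK \tensor \cH$ be any purification of $\tau$, which exists by the same dimension hypothesis.

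Finally I would verify that $\ket{w} \defeq \sqrt{\alpha}\, \ket{0}\ket{v'} + \sqrt{1-\alpha}\, \ket{1}\ket{w'} \in \complex^2 \tensor \cK \tensor \cH$ is a purification of $\sigma$. Since the flag vectors $\ket{0}, \ket{1}$ are orthonormal in $\complex^2$, tracing out the $\complex^2 \tensor \cK$ registers yields
\[
\tr_{\complex^2 \tensor \cK}\, \density{w} \quad = \quad \alpha\, \tr_{\cK}\, \density{v'} \; + \; (1-\alpha)\, \tr_{\cK}\, \density{w'} \quad = \quad \alpha\, \rho' + (1-\alpha)\, \tau \quad = \quad \sigma \enspace,
\]
as claimed. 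There is essentially no obstacle in this argument: all the substance has been absorbed into Theorem~\ref{thm-substate}, and what remains is the standard ``coherent rejection sampling'' construction already foreshadowed in the introduction. The only subtlety is invoking Uhlmann's theorem to rotate the purifying system so that the purification $\ket{v'}$ of $\rho'$ stays close to the given purification $\ket{v}$ of $\rho$; the dimension hypothesis $\dim(\cK) \geq \dim(\cH)$ guarantees that both $\rho'$ and $\tau$ admit purifications in $\cK \tensor \cH$.
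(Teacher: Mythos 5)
Your proposal is correct and follows essentially the same route as the paper's proof: apply Theorem~\ref{thm-substate} to obtain $\rho'$, decompose $\sigma = \alpha\rho' + (1-\alpha)\tau$, invoke the Uhlmann theorem to pick the purification $\ket{v'}$ of $\rho'$ achieving $\sF(\density{v},\density{v'}) = \sF(\rho,\rho')$, and take any purification $\ket{w'}$ of the residual state. The only difference is that you spell out the trace-out verification that the paper leaves implicit, which is a welcome addition but not a different argument.
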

\begin{proof}
Let~$\rho'$ be a state given by Theorem~\ref{thm-substate} such that
fidelity~$\sF(\rho', \rho) \geq 1 - \eps$ and~$\alpha \rho' \preceq
\sigma$.  Then we can decompose~$\sigma$ as
\[
\sigma \quad = \quad  \alpha \rho'  + (1 - \alpha) \theta \enspace ,
\]
where~$\theta \in \sD(\cH)$ is some quantum state. By the Uhlmann
theorem~\cite{NielsenC00} there is a purification~$\ket{v'} \in \cK
\otimes \cH$ of $\rho'$ such that $\sF(\density{v}, \density{v'}) =
\sF(\rho, \rho') \geq 1 - \eps$. Let $\ket{w'} \in \cK \otimes \cH$ be
any purification of $\theta$. Then we may verify that~$\ket{w}$ as
defined in the statement of the corollary is a purification of
$\sigma$.
\end{proof}

The dependence of the bound on the~$\eps$-smooth relative min-entropy
in Theorem~\ref{thm-substate} in terms of observational divergence is
optimal up to a constant factor, as stated in
Theorem~\ref{thm-converse}. We start its proof with the following
lemma.

\begin{lemma}
\label{lem-prob}
Let~$\delta, \delta' \in [0,1]$ and~$\beta \in [0,1/4]$ such that
\[
\left( 
    \sqrt{\delta\, \delta'} + \sqrt{(1 - \delta)(1 - \delta')}
\right)^2
\quad \geq \quad 1 - \beta \delta \enspace.
\]
Then~$\delta' \geq \left( 1-\sqrt{\beta}\, \right)^2 \delta$.
\end{lemma}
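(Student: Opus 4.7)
The plan is to exploit the trigonometric parametrization $\delta = \sin^2\phi$, $\delta' = \sin^2\phi'$ with $\phi,\phi' \in [0,\pi/2]$. Under this change of variables, the Bhattacharyya-like quantity $\sqrt{\delta\delta'}+\sqrt{(1-\delta)(1-\delta')} = \sin\phi\sin\phi' + \cos\phi\cos\phi'$ collapses to $\cos(\phi-\phi')$, so the hypothesis reads $\cos^2(\phi-\phi') \geq 1 - \beta\sin^2\phi$, which is equivalent to $|\sin(\phi-\phi')| \leq \sqrt{\beta}\sin\phi$. The conclusion to be proved becomes $\sin\phi' \geq (1-\sqrt\beta)\sin\phi$, and squaring yields the statement of the lemma.

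I would split into cases on the sign of $\phi-\phi'$. If $\phi' \geq \phi$, then $\delta' \geq \delta \geq (1-\sqrt\beta)^2\delta$ and we are done. If $\phi' < \phi$, then $\sin(\phi-\phi') \in [0,\sqrt{\beta}\sin\phi]$ and $\cos(\phi-\phi') \geq \sqrt{1-\beta\sin^2\phi}$, so the angle-subtraction identity
\[
\sin\phi' \;=\; \sin\phi\cos(\phi-\phi') - \cos\phi\sin(\phi-\phi')
\]
gives $\sin\phi' \geq \sin\phi\bigl(\sqrt{1-\beta\sin^2\phi} - \sqrt\beta\cos\phi\bigr)$. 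It then suffices to establish the one-variable inequality $\sqrt{1-\beta\sin^2\phi} \geq (1-\sqrt\beta) + \sqrt\beta\cos\phi$. Both sides are nonnegative under $\beta \leq 1/4$, so squaring and using $\sin^2\phi+\cos^2\phi=1$ reduces the claim to $2\sqrt\beta(1-\cos\phi) \geq 2\beta(1-\cos\phi)$, which holds because $\sqrt\beta\geq\beta$ for $\beta\leq1$; the boundary case $\cos\phi=1$ corresponds to $\delta=0$ and is trivial.

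The only real delicacy is bookkeeping of signs so that the squaring step is monotone, and the trigonometric substitution handles this automatically by confining every quantity to the first quadrant. Apart from that, the proof is a short calculation with no serious obstacle, and the condition $\beta\leq1/4$ is in fact stronger than the proof requires (any $\beta\leq1$ suffices for the inner reduction, with $\beta\leq1/4$ merely ensuring that $1-\sqrt\beta\geq1/2$, which is presumably what the downstream application of the lemma wants).
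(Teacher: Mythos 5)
Your proof is correct, and its skeleton is the same as the paper's: parametrize $\delta,\delta'$ by first-quadrant angles so that $\sqrt{\delta\delta'}+\sqrt{(1-\delta)(1-\delta')}$ becomes $\cos(\phi-\phi')$, then apply the angle-addition formula (the paper's bound $\cos(\phi+\theta)\ge\sqrt{\delta(1-\beta\delta)}-\sqrt{(1-\delta)\beta\delta}$, with $\delta=\cos^2\phi$, is exactly your $\sin\phi'\ge\sin\phi\,\sqrt{1-\beta\sin^2\phi}-\sqrt{\beta}\,\sin\phi\cos\phi$, with $\delta=\sin^2\phi$). Where you genuinely diverge is the endgame. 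The paper squares the two-term difference and runs a chain of estimates, $\bigl(1-(1+\beta)\delta+\beta\delta^2\bigr)^{1/2}\le\sqrt{1-\delta}\le 1-\delta/2$, whose final step --- discarding the term $(\sqrt{\beta}-2\beta)\delta^2\ge 0$ --- is the only place the hypothesis $\beta\le 1/4$ is actually invoked. You instead factor out $\sin\phi$ and reduce to the one-variable inequality $\sqrt{1-\beta\sin^2\phi}\ge(1-\sqrt{\beta})+\sqrt{\beta}\cos\phi$, whose squared form collapses to $2(\sqrt{\beta}-\beta)(1-\cos\phi)\ge 0$. This is cleaner, and your observation that it proves the lemma for all $\beta\le 1$ is right: the restriction $\beta\le 1/4$ is needed only so that the constant $(1-\sqrt{\beta}\,)^2\ge 1/4$ is useful in the downstream application (the proof of Theorem~\ref{thm-converse}, which sets $\beta=1/4$). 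Your explicit case split on the sign of $\phi-\phi'$ also neatly replaces the paper's argument that $\phi+\theta\le\pi/2$ identifies the minimizing configuration.
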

\begin{proof}
Let~$u = \left( \sqrt{\delta}, \sqrt{1-\delta} \right)^\transpose$
and~$u' = \left( \sqrt{\delta'}, \sqrt{1-\delta'} \right)^\transpose$
be vectors in~$\reals^2$.  Let~$\phi, \phi' \in [0,\pi/2]$ be the
angles~$u,u'$ make with~$(1,0)^\transpose$, respectively. By
hypothesis, $\braket{u}{u'}^2 \ge 1 - \beta \delta$. Let~$\theta \in
[0,\pi/2]$ be the angle between~$u,u'$, so that~$\cos^2 \theta \ge 1 -
\beta \delta$.

We wish to bound~$\delta' = \cos^2 \phi'$ from below given
that~$\size{\phi' - \phi} = \theta$. Observe that~$\cos(\phi+\theta)
\ge \sqrt{\delta (1-\beta\delta)} - \sqrt{(1 - \delta)\beta\delta} \ge
0$, so that~$\phi + \theta \le \pi/2$. Therefore, $\delta'$ takes its
minimum value when~$\phi' = \phi+\theta$.

We may now bound~$\delta'$ as follows.
\begin{eqnarray*}
\delta' \quad = \quad \cos^2 \phi' 
    & \ge &  \cos^2 (\phi+\theta) \\
    & \ge & \left( \sqrt{\delta (1-\beta\delta)} 
             - \sqrt{(1 - \delta)\beta\delta} \right)^2 \\
    & =   & (1+\beta) \delta - 2 \beta \delta^2 - 2 \sqrt{\beta}\, \delta 
            \left( 1 - (1+\beta) \delta + \beta \delta^2 \right)^{1/2} \\
    & \ge & (1+\beta) \delta - 2 \beta \delta^2 - 2 \sqrt{\beta}\, \delta 
            (1 - \delta)^{1/2} \\
    & \ge & (1+\beta) \delta - 2 \beta \delta^2 - 2 \sqrt{\beta}\, \delta 
            (1 - \delta/2) \\
    & =   & \left(1- \sqrt{\beta}\right)^2 \delta 
            + (\sqrt{\beta} - 2 \beta) \delta^2 \\
    & \ge & \left(1- \sqrt{\beta}\right)^2 \delta \enspace,
\end{eqnarray*}
since~$\beta \leq 1/4$.
\end{proof}

We are now ready to prove the optimality of Theorem~\ref{thm-substate}.

\medskip

\begin{proofof}{Theorem~\ref{thm-converse}}
It suffices to prove that for any POVM element~$M \in \rP(\cH)$
with~$\tr(M\rho) \neq 0$,
\[
\tr(M\rho) \; \log \frac{\tr(M\rho)}{\tr(M\sigma)}
\]
is bounded by~$4k + 3$ from above. 

Fix such a POVM element~$M$, let~$\delta = \tr(M\rho)$, and~$\eps =
\beta \delta$ for some~$\beta \in (0,1)$ to be specified later. By
hypothesis, there is a quantum state~$\rho' \in \rD(\cH)$
with~$\sF(\rho',\rho) \ge 1 - \eps$ and~$\rho' \preceq \kappa\, \sigma$,
where
\[
\kappa \quad = \quad \frac{2^{k/\eps}}{1-\eps} \enspace.
\]
Let~$\delta' = \tr(M\rho')$. By the monotonicity of fidelity under
CPTP operations~\cite{NielsenC00}, we have
\[
\left(
    \sqrt{\delta\, \delta'} + \sqrt{(1 - \delta)(1 - \delta')}
\right)^2
\quad \ge \quad \sF(\rho',\rho) \quad \ge \quad 1 - \eps 
\quad = \quad 1 - \beta \delta \enspace.
\]
By Lemma~\ref{lem-prob}, we have~$\tr(M\rho') = \delta' \ge
(1-\sqrt{\beta}\,)^2 \delta$ if~$\beta \leq 1/4$.

We set~$\beta = 1/4$, so that~$\tr(M\rho') \geq \delta/4$.
Furthermore,
\[
\tr(M\sigma) \quad \geq \quad \frac{(1-\eps)}{2^{k/\eps}} \; \tr(M\rho')
\quad \geq \quad \frac{(1-\delta/4)}{2^{4k/\delta}} (\delta/4)
\quad \geq \quad \frac{\delta}{ 2^{3 + 4k/\delta}} \enspace,
\]
as~$\delta \leq 1$.
So
\[
\tr(M\rho) \; \log \frac{\tr(M\rho)}{\tr(M\sigma)}
\quad = \quad \delta \; \log \frac{\delta}{\tr(M\sigma)}
\quad \leq \quad 4k + 3 \enspace.
\]
\end{proofof}

\section{A proof based on SDP duality}
\label{sec-sdp}

In this section we present a second alternative proof of the Quantum
Substate Theorem, Theorem~\ref{thm-substate}. The proof is based on a
formulation of smooth relative min-entropy as a semi-definite
program. 

The optimization problem~(P1) in Section~\ref{sec-substate} is seen to
be an SDP once we express the fidelity constraint as a semi-definite
inequality.  This is based on a formulation due to
Watrous~\cite{Watrous11} of the fidelity of two quantum states as an
SDP. For completeness, we include a proof of its correctness.

\begin{lemma}[Watrous]
\label{lem-fidelity}
Suppose $\rho, \rho' \in \rD(\cH)$ are quantum states in the Hilbert
space~$\cH$. The fidelity~$\sF(\rho,\rho')$ of the two states equals
the square of the optimum of the following SDP over the variable~$X
\in \rL(\cH)$.
\begin{align*}
\text{maximize:} & \quad \frac{1}{2} \left( \tr\, X 
                         + \tr\, X^\adjoint \right) \\
   \text{subject to:} & \\
   & \left( \begin{array}{cc}
               \rho' & X \\
               X^\adjoint & \rho
            \end{array}
     \right) \quad \succeq \quad 0 & \mathrm{(P2)} \\
   & X \in \rL(\cH)
\end{align*}
\end{lemma}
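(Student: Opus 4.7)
The plan is to verify Watrous's SDP formulation in two standard steps: characterize when the block-matrix constraint is satisfied, then optimize the objective over that feasible set. The characterization I would use is the Schur-complement fact that for $\rho, \rho' \succeq 0$,
\[
\begin{pmatrix} \rho' & X \\ X^\adjoint & \rho \end{pmatrix} \quad \succeq \quad 0
\]
if and only if $X = \sqrt{\rho'}\, K\, \sqrt{\rho}$ for some contraction $K \in \rL(\cH)$ (i.e., $\norm{K}_\infty \leq 1$). In the full-rank case this is immediate from the Schur complement criterion ($\rho - X^\adjoint \rho'^{-1} X \succeq 0$ is equivalent to $\norm{\rho'^{-1/2} X \rho^{-1/2}}_\infty \leq 1$, and one sets $K = \rho'^{-1/2} X \rho^{-1/2}$); the singular case follows by perturbing $\rho \mapsto \rho + \delta \rI$, $\rho' \mapsto \rho' + \delta \rI$ and letting $\delta \to 0$, using compactness of the set of contractions.

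Given this factorization, cyclicity of the trace gives $\tr X = \tr(\sqrt{\rho}\sqrt{\rho'}\, K)$, and H\"older's inequality for Schatten norms, $|\tr(AB)| \leq \trnorm{A}\, \norm{B}_\infty$, yields
\[
\frac{1}{2}\left(\tr X + \tr X^\adjoint\right) \;\leq\; |\tr X| \;\leq\; \trnorm{\sqrt{\rho}\sqrt{\rho'}}\; \norm{K}_\infty \;\leq\; \trnorm{\sqrt{\rho'}\sqrt{\rho}} \;=\; \sqrt{\sF(\rho,\rho')} \enspace.
\]
For the matching lower bound, I would set $K \defeq V^\adjoint$, where $\sqrt{\rho}\sqrt{\rho'} = V\, |\sqrt{\rho}\sqrt{\rho'}|$ is the polar decomposition with $V$ a partial isometry, and define $X \defeq \sqrt{\rho'}\, V^\adjoint\, \sqrt{\rho}$. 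This $X$ is feasible, and the identity $V^\adjoint V\, |\sqrt{\rho}\sqrt{\rho'}| = |\sqrt{\rho}\sqrt{\rho'}|$ (since $V^\adjoint V$ is the orthogonal projection onto the support of $|\sqrt{\rho}\sqrt{\rho'}|$) gives $\tr X = \tr |\sqrt{\rho}\sqrt{\rho'}| = \trnorm{\sqrt{\rho'}\sqrt{\rho}}$, which is real and equals $\sqrt{\sF(\rho,\rho')}$. The objective therefore attains the upper bound.

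The main obstacle is a clean handling of the block-matrix characterization when $\rho$ or $\rho'$ is singular, for which the perturbation/limit argument suffices (alternatively, one can restrict all operators to $\support(\rho) \oplus \support(\rho')$ and argue directly). Everything else is a textbook application of Schatten duality and the polar decomposition, and squaring the optimum value $\sqrt{\sF(\rho,\rho')}$ gives the claim.
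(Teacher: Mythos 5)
Your proposal is correct and follows essentially the same route as the paper: both rest on the characterization that the block matrix is positive semi-definite iff $X = \sqrt{\rho'}\, K \sqrt{\rho}$ for a contraction $K$ (the paper cites Bhatia, Theorem~IX.5.9, where you supply a Schur-complement-plus-limit argument), and both then reduce the objective to the variational characterization of $\trnorm{\sqrt{\rho'}\sqrt{\rho}}$, with your H\"older bound and polar-decomposition construction simply making the paper's one-line appeal to $\trnorm{M} = \max\{|\tr(ZM)| : \norm{Z}\le 1\}$ explicit.
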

\begin{proof}
By Theorem~IX.5.9 in the text~\cite{Bhatia97}, the matrix inequality
in the program~(P2) holds iff there is an operator~$Y \in \rL(\cH)$ such
that~$\norm{Y} \leq 1$ and~$X = \sqrt{\rho'}\, Y \sqrt{\rho}$.
Since~$\sF(\rho',\rho) = \trnorm{ \sqrt{\rho'} \sqrt{\rho} }^2$ and we
may characterize trace norm as~$\trnorm{M} = \max \set{ \size{\tr(ZM)}
  \;:\; Z \in \rL(\cH), \norm{Z} \le 1 }$ for any~$M \in \rL(\cH)$,
the lemma follows.
\end{proof}

The problem~(P1) may now be formulated as the following SDP with
variables~$\kappa \in \reals, \rho' \in \rL(\cH), X \in \rL(\cH)$ in
the primal problem, and variables~$Z_1, Z_2 \in \rL(\cH)$ and~$z_3,
z_4 \in \reals$ in the dual, where~$Z_1, Z_2$ are Hermitian.
\begin{center}
  \begin{minipage}[t]{3in}
    \centerline{\underline{P3 Primal problem}}\vspace{-5mm}
\begin{align*}
  \text{minimize:} & \quad \kappa \\
  \text{subject to:}   & \\
    \rho' & \quad \preceq \quad \kappa\, \sigma \\
    \tr \, \rho' & \quad = \quad 1 \\
    \left( \begin{array}{cc}
               \rho' & X \\
               X^\adjoint & \rho
            \end{array}
    \right) & \quad \succeq \quad 0 \\
    \tr\, X + \tr\, X^\adjoint & \quad \geq \quad 2 \sqrt{1-\eps} \\
    \rho' \in \rL(\cH), & \quad \rho' \succeq 0 \\
    \kappa \in \reals, & \quad \kappa \geq 0 \\
    X \in \rL(\cH)
\end{align*}
\end{minipage}
\vspace{1em}
\begin{minipage}[t]{3in}
    \centerline{\underline{P3 Dual problem}}\vspace{-5mm}
    \begin{align*}
      \text{maximize:} \quad z_4 + 2 z_3 \sqrt{1-\eps} & + \tr(Z_2 \rho) \\ 
      \text{subject to:} \hspace{7em} & \\
          \tr(Z_1 \sigma) & \quad \leq \quad 1 \\
          \left( \begin{array}{cc}
               z_4 \rI - Z_1 & z_3 \rI \\
               z_3 \rI & Z_2
            \end{array}
    \right) & \quad \preceq \quad 0 \\
    Z_1 \in \rL(\cH), & \qquad Z_1 \succeq 0 \\
    z_3, z_4 \in \reals, & \qquad  z_3 \geq 0 \\
    Z_2 \in \rL(\cH), & \qquad Z_2 \textrm{ Hermitian}
    \end{align*}
\end{minipage}
\end{center}
The equivalence of the problems~(P1) and~(P3) follows from
Lemma~\ref{lem-fidelity} and paves the way for the second proof.

\medskip
\begin{proofof}{Theorem~\ref{thm-substate}}
We may verify that strong duality holds since the P3 primal program is
feasible, and the dual is strictly
feasible~\cite{Watrous11,Lovasz03}. Therefore, it suffices to bound the dual
objective function for any set of dual feasible variables~$(Z_1, Z_2,
z_3, z_4)$.

By Lemma~\ref{lem:forM}, there is a quantum state~$\rho'$,
with~$\sF(\rho',\rho) \geq 1 - \eps$, such that~$(1 - \eps) \, \tr
(Z_1 \rho') \leq 2^{d/\eps}\, \tr(Z_1 \sigma) \leq 2^{d/\eps}$, where~$d
= \sD(\rho \| \sigma)$.

Since~$\sF(\rho',\rho) \geq 1 - \eps$, by Lemma~\ref{lem-fidelity},
there is an operator~$X \in \rL(\cH)$ such that
\[
\left( \begin{array}{cc}
               \rho' & X \\
               X^\adjoint & \rho
            \end{array}
    \right)  \quad \succeq \quad 0 \enspace,
\]
and~$\tr\, X + \tr\, X^\adjoint \geq 2 \sqrt{1-\eps}$. Therefore,
\[
\tr \left( \begin{array}{cc}
               \rho' & X \\
               X^\adjoint & \rho
            \end{array}
\right)
\left( \begin{array}{cc}
               z_4 \rI - Z_1 & z_3 \rI \\
               z_3 \rI & Z_2
            \end{array}
\right)
    \quad \le \quad 0 \enspace,
\]
In other words,
\begin{eqnarray*}
z_4 - \tr(Z_1 \rho') + z_3 (\tr\, X + \tr\, X^\adjoint) +  \tr(Z_2 \rho) 
    & \le &  0 \enspace,
\end{eqnarray*}
which implies that the dual objective function is bounded as
\begin{eqnarray*}
z_4 + 2 z_3 \sqrt{1-\eps} +  \tr(Z_2 \rho) 
    & \le &  \tr(Z_1 \rho') \quad \le \frac{2^{d/\eps}}{1-\eps}
             \enspace.
\end{eqnarray*}
This completes the proof.
\end{proofof}

\section{Conclusion}

We presented two alternative proofs of the Quantum Substate Theorem
due to Jain, Radhakrishnan, and Sen~\cite{JainRS02,JainRS09}. In
addition to giving bounds on the smooth relative min-entropy of two
quantum states, this gives us a powerful operational interpretation of
relative entropy and observational divergence.  In the process, we
resolve two questions left open by Jain \etal.

The crucial insight here is that the we may express smooth relative
min-entropy as a convex or semi-definite program and appeal to duality
theory. In this respect, we join a growing number of applications of
convex and semi-definite programming to quantum information processing.
This approach can be extended to the more general notion of
smooth relative min-entropy studied by Renner~\cite{Renner05} to get
similar bounds on this quantity. This view of the quantity may shed
light on its numerous applications.

\subsection*{Acknowledgements}

For helpful discussions and pointers to the literature, we thank
Matthias Christandl, Rajat Mittal, Marco Piani, Renato Renner, Pranab
Sen, and John Watrous.

Our first attempt at this work involved an awkward formulation of
smooth relative min-entropy defined with trace distance as a
semi-definite program. We set aside this formulation in favour of the
min-max program, which we believe also leads to a more intuitive
proof.  We thank John Watrous for sharing a simple SDP formulation (with
a bound on trace distance) and the resulting proof. Our current presentation 
involving fidelity supersedes this.

Finally, we thank the referees for their extensive feedback. It lead 
not only to
what we hope is a vastly improved presentation of the work, but also
to the discovery of a tight connection between smooth relative
min-entropy and observational divergence.

\bibliography{substate-new-proof}
\bibliographystyle{plain}

\end{document}